%% file: ex_article.tex
\documentclass[hidelinks,onefignum,onetabnum]{siamart251216}

\input{ex_shared}\usepackage{tabularx} \usepackage{subcaption}

\ifpdf
\hypersetup{
  pdftitle={A quantum optimization framework for data--assimilation--augmented parameter estimation},
  pdfauthor={M. J. Ahmad, M. Mohammadisiahroudi, A. Biswas and K. Hoffman}
}
\fi

\newcommand{\commentout}[1]{}
\begin{document}

\maketitle

\begin{abstract}
Parameter estimation is a fundamental challenge in the calibration of ordinary differential equation (ODE) models, where repeated numerical integration can lead to high computational cost. In this work, we investigate whether quantum algorithms can be leveraged to assist parameter estimation in nonlinear dynamical systems. We develop a hybrid classical--quantum framework that reformulates a data-assimilation-augmented parameter estimation problem as a combinatorial optimization task. Model dynamics and data assimilation are enforced entirely on the classical side, while the resulting parameter estimation cost functional is discretized and approximated by a quadratic unconstrained binary optimization (QUBO) surrogate. This surrogate is mapped to an Ising Hamiltonian, and quantum optimizers are used to search for low-energy configurations corresponding to candidate parameter estimates. We apply the framework to SIS and SIR epidemic models, the chaotic Lorenz--63 system, and a high-dimensional two-layer Lorenz--96 system. In this setting, the method is used to recover classical system parameters from partial state observations across steady-state, chaotic, and high-dimensional multiscale dynamical systems. Numerical experiments with synthetic data show that the proposed approach accurately recovers parameters while requiring data-assimilation solves only on a prescribed coarse grid. The framework avoids quantum state tomography, illustrating a viable pathway for integrating quantum optimization into data-driven parameter estimation for nonlinear dynamical systems.
\end{abstract}

\begin{keywords}
Quantum Optimization, Data Assimilation, Parameter Estimation, Nonlinear Dynamical Systems 
\end{keywords}

\begin{MSCcodes}
 34D06, 34A55, 65P99, 65K10, 68Q12
\end{MSCcodes}

\section{Introduction}
\label{Intro}

Ordinary differential equation (ODE) models play a central role in the mathematical modeling of time-dependent processes across the natural and applied sciences. In epidemiology, compartmental models such as the susceptible--infected--susceptible (SIS) and susceptible--infected--recovered (SIR) systems provide interpretable descriptions of disease transmission, recovery, and population-level progression through epidemiological states \cite{kermack,hethcote2000}. In atmospheric science and nonlinear dynamics, the Lorenz--63 and Lorenz--96 systems are standard benchmark models for studying predictability, chaos, and multiscale dynamical behavior \cite{lorenz1963,lorenz1996,sparrow1982}. These models are widely used because their parameters often have direct physical, biological, or epidemiological interpretations, and because their trajectories can be compared with observational data.

The practical use of such models is critically dependent on the accurate estimation of unknown parameters. Even when the governing equations are known, uncertainty in parameters can lead to substantial errors in predicted dynamics. In epidemiological models, transmission and recovery rates determine the timing, magnitude, and duration of outbreaks, and inaccurate parameter values may lead to unreliable forecasts or misleading assessments of intervention strategies \cite{Li2005Parameter,roda}. In chaotic systems, the challenge is even more pronounced: small errors in parameters or initial conditions can grow rapidly over time, producing trajectories that diverge from the observed dynamics even when the underlying model is deterministic \cite{lorenz1963,sparrow1982}. Parameter estimation is therefore a fundamental component of forward prediction and uncertainty quantification.

Parameter estimation in dynamical systems is naturally formulated as an inverse problem. Given a collection of observations, one seeks parameter values that produce model trajectories consistent with the available data. A common approach is to pose this task as an optimization problem constrained by governing differential equations. In particular, the unknown parameters are determined by minimizing a data-misfit cost functional subject to the ODE dynamics. This leads to an ODE-constrained optimization problem in which each evaluation of the cost functional requires the numerical integration of the underlying dynamical system. Since the observable-to-parameter map is often nonlinear and nonconvex, classical solution methods may require many repeated forward simulations before convergence. The resulting computational burden can become significant when the parameter space is high-dimensional, when observations are partial or noisy, or when the underlying dynamics exhibit sensitive dependence on parameters and initial conditions. Consequently, the development of efficient computational strategies for parameter estimation remains an important challenge in scientific computing and inverse problems \cite{bankskunisch,ascher2011,tarantola2005}.

Recent advances in quantum computing have motivated the exploration of quantum algorithms as alternative computational tools for challenging optimization tasks. In particular, hybrid quantum--classical methods seek to exploit the strengths of both paradigms by combining conventional numerical computation with quantum optimization subroutines. Within this context, the present work investigates whether quantum optimization techniques can assist parameter estimation in nonlinear dynamical systems. The goal is not to replace classical time integration, but to reformulate the parameter-search component in a form that can be treated by quantum-compatible optimization methods. This provides a pathway for incorporating quantum optimization into inverse problems governed by differential equations while remaining compatible with quantum devices \cite{farhi2014qaoa,preskill2018nisq,lucas2014ising}.

\subsection{Related Work}

\paragraph{\textbf{Conventional Parameter Estimation}} 
Conventional parameter estimation methods include nonlinear least-squares \cite{Li2005Parameter}, Bayesian \cite{BoerschSupan2017, gmd-14-4319-2021, Ghasemi2011}, data-assimilation \cite{NEWEY2025114121, Carlson2022, Ahmad2025DA, Martinez2024}, and machine-learning-based approaches \cite{YANG2025112649, Ahmad2024, PSOparamestimate, alavanifractional, param1}. These methods have been applied successfully to many dynamical systems, but their performance can be affected by limited or partial observations and by the computational expense of repeated model integration.

Data assimilation provides a natural framework for incorporating observational data into dynamical models \cite{LawStuartZygalakis2015,Kalnay2003,Asch2016,carrassi2018}. These methods combine model dynamics with available observations to improve state reconstruction and parameter estimation, including in systems with partial or uncertain data \cite{Carlson2022,Martinez2024,NEWEY2025114121,Azouani2014,Ahmad2025DA}. In parameter estimation, data assimilation can be used to construct an observation-informed cost functional that measures the agreement between model-generated trajectories and the available data. This provides a stable basis for estimating unknown parameters, particularly when the initial state is uncertain or only a subset of the system variables is observed. A useful feature of this formulation is that the nudged system need not be initialized at the true initial condition; the feedback term can stabilize the observed components and produce a meaningful cost functional landscape even when the initial state is uncertain \cite{Ahmad2025DA}. 

\paragraph{\textbf{Quantum Differential Equation Solvers}}

Quantum algorithms for differential equations have attracted increasing attention because many scientific and engineering applications are governed by dynamical systems. Several approaches have been proposed for solving ordinary and partial differential equations on quantum computers. One major direction relies on quantum linear algebra, where differential equations are discretized and reformulated as linear systems that can be solved using quantum linear system algorithms (QLSAs) \cite{berry2017quantum,childs2021highprecision}. Another line of work studies nonlinear differential equations through linear representations and Carleman-type embeddings, enabling quantum algorithms for selected classes of nonlinear systems \cite{liu2021efficient,jin2023linearrepresentation}. More recently, Schrödingerization techniques have been introduced to transform differential equations into Hamiltonian evolution problems that are amenable to quantum simulation \cite{jin2024schrodingerisation}. Variational and time-marching approaches have also been explored, where hybrid quantum--classical algorithms propagate the solution through a sequence of optimization steps using parameterized quantum circuits \cite{kyriienko2021solving}.

Despite these advances, directly solving nonlinear ODEs on quantum hardware remains challenging. Existing methods often require restrictive assumptions on sparsity, block encodings, state preparation, measurement access, smoothness of the dynamics, or the availability of sufficiently deep quantum circuits \cite{liu2021efficient,jin2023linearrepresentation}. Furthermore, parameter estimation problems typically require repeated evaluations of the underlying dynamical system for many candidate parameter values, which can significantly amplify the computational cost of quantum simulation. These limitations make direct quantum time integration difficult for the nonlinear and partially observed systems considered in this work. Consequently, rather than using quantum resources to solve the differential equations themselves, we employ classical simulation and data assimilation to construct the parameter-estimation cost functional and use quantum optimization only for the resulting combinatorial search problem. This strategy avoids the challenges associated with quantum simulation of nonlinear dynamics while remaining compatible with quantum optimization methods.

\paragraph{\textbf{Quantum Optimization}}
Recent developments in quantum computing have motivated the study of quantum and hybrid quantum--classical algorithms for optimization, inverse problems, and scientific computing. Quantum systems provide computational mechanisms that differ fundamentally from classical digital computation. Superposition allows an \(n\)-qubit register to represent amplitudes over \(2^n\) computational basis states, entanglement provides nonclassical correlations among qubits, and tunneling effects have motivated quantum optimization strategies for exploring complicated energy landscapes \cite{NielsenChuang,KadowakiNishimori1998,AlbashLidar2018}. These features do not imply an automatic speedup for every problem, but they provide a foundation for designing quantum algorithms that may offer advantages for selected structured computational tasks.

Among quantum algorithms, the Quantum Approximate Optimization Algorithm (QAOA) has received substantial attention as a hybrid variational method for combinatorial optimization \cite{FarhiQAOA}. QAOA approximately minimizes a cost functional encoded as an Ising Hamiltonian by alternating between a problem Hamiltonian and a mixing Hamiltonian, with variational parameters optimized by a classical outer loop. Since many combinatorial optimization problems can be written as quadratic unconstrained binary optimization (QUBO) problems and mapped to equivalent Ising Hamiltonians, QUBO provides a natural bridge between classical binary optimization and QAOA-based quantum optimization \cite{Ruan2023QuantumOpt}. This makes QAOA relevant for parameter estimation only after the original continuous inverse problem has been reformulated into a finite-dimensional binary optimization problem.

Quantum annealing provides another optimization paradigm for QUBO and Ising-type problems. In quantum annealing, the optimization problem is encoded into an energy landscape, and the algorithm seeks low-energy configurations by evolving the system from an initial Hamiltonian to a problem Hamiltonian. This approach is especially natural for QUBO formulations because binary variables can be mapped directly to Ising spins, and low-energy spin configurations correspond to candidate minimizers of the original binary cost functional \cite{KadowakiNishimori1998,AlbashLidar2018}. In this work, we include simulated quantum annealing (SQA) as an additional QUBO solver, allowing comparison with the QAOA-based IBM simulator and IBM Kingston QPU implementations.

Beyond QAOA, several other quantum optimization paradigms have been proposed for continuous and structured optimization problems arising in scientific computing. More recently, Decoded Quantum Interferometry (DQI) has been introduced as a quantum optimization framework that exploits the Fourier structure of objective functions to reduce certain optimization problems to decoding problems, providing superpolynomial speedups over the best known classical algorithms for specific structured optimization problems \cite{jordan2025optimization}.

Alternatively, several other quantum optimization paradigms have been proposed for continuous and structured optimization problems arising in scientific computing. Quantum Gibbs-sampling methods prepare thermal states of suitably designed Hamiltonians and have been studied for optimization, machine learning, and probabilistic inference, where low-energy states correspond to high-quality solutions \cite{Brandao2017_quantum, Apeldoorn2018_Improvements}. In addition, Quantum Hamiltonian Descent (QHD) has recently been proposed as a quantum counterpart of classical gradient-based optimization \cite{Leng2023QHD}. Derived from the path-integral formulation of dynamical systems associated with the continuous-time limit of gradient descent, QHD describes optimization as a quantum Hamiltonian evolution and exploits quantum-mechanical effects such as tunneling to explore nonconvex landscapes. The resulting dynamics can be implemented on both digital and analog quantum platforms, and empirical studies have demonstrated promising performance on nonconvex optimization problems \cite{Leng2023QHD}. More recently, theoretical and computational developments have extended QHD to non-smooth optimization problems and software frameworks for practical deployment on quantum hardware \cite{Leng2025NonSmoothQHD,Kushnir2024QHDOPT}.

For constrained optimization, quantum interior point methods (QIPMs) combine classical interior point frameworks with quantum linear algebra subroutines to accelerate the solution of the Newton systems that arise during each iteration \cite{kerenidisParkas2020_quantum, augustino2021quantum, mohammadisiahroudi2024efficient}. These methods have been developed for linear \cite{mohammadisiahroudi2025quantum}, quadratic \cite{wu2023inexact}, semidefinite \cite{mohammadisiahroudi2025quantum}, and conic optimization problems \cite{Kerenidis2019_Quantum} and provide some of the strongest theoretical complexity guarantees currently known for quantum optimization in continuous domains. Unlike QAOA and other variational methods, QIPMs are primarily designed for fault-tolerant quantum computers and rely heavily on quantum linear system algorithms and efficient state preparation. Together, QAOA, Gibbs-sampling approaches, QHD, and QIPMs illustrate the diversity of current quantum optimization research, ranging from variational algorithms to asymptotically efficient quantum algorithms for large-scale optimization.

\paragraph{\textbf{Motivation for Quantum Optimization}}

The preceding discussion suggests that the main computational bottleneck is not the lack of a differential equation solver, but the repeated search over candidate parameter values. Classical optimization methods can require many forward model evaluations, while direct quantum time integration is not yet practical for the nonlinear ODE systems considered here. This motivates a hybrid formulation in which data assimilation is used to construct a stabilized parameter-estimation cost functional, and quantum optimization is applied only after this cost functional has been converted into a finite-dimensional discrete search problem. 

Retaining the data-assimilation component is important because full-state
observations are rarely available in practical parameter estimation problems.
Full-state observation represents the ideal setting for trajectory-based
estimation, since every component of the model state can be compared directly
with the model output. In many applications, however, only a subset of the
state variables is observed. Data assimilation uses these partial observations,
together with the model dynamics, to stabilize the observed components and
recover information about the unobserved components. Therefore, we retain the
data-assimilation formulation in the proposed quantum optimization framework
so that the resulting parameter estimator remains applicable to partially
observed dynamical systems.

QUBO is a natural formulation for this discrete optimization stage. The cost functional is built from squared residuals and therefore has an intrinsic least-squares structure. Although the dependence of these residuals on the unknown parameters is generally nonlinear, the region of the parameter domain with small cost functional values can be approximated by a quadratic surrogate fitted from coarse-grid data-assimilation evaluations. Once the refined-grid parameter search is encoded using binary variables, this quadratic surrogate leads directly to a QUBO problem. Moreover, QUBO cost functionals admit a standard mapping to Ising Hamiltonians, making them compatible with QAOA and other quantum optimization methods. Thus, the QUBO formulation is not introduced as an arbitrary discretization; it is the mechanism that connects the classical cost functional to a quantum-compatible optimization problem. This also separates the role of the quantum algorithm from the dynamical simulation: the quantum routine is not used to solve the ODE, but only to search the binary-encoded surrogate landscape obtained from classical data-assimilation evaluations.

\subsection{Contributions}

The main contribution of this paper is a hybrid data-assimilation and quantum-optimization framework for parameter estimation in nonlinear ODE systems. The method uses coarse-grid cost functional values to construct a quadratic surrogate, encodes the refined parameter search as a QUBO, maps the resulting binary cost functional to an Ising Hamiltonian, and applies quantum optimizers to obtain parameter estimates.

The main contributions of this work are as follows:
\begin{itemize}
    \item We develop a hybrid classical--quantum framework for parameter estimation in nonlinear ODE systems by combining nudging-based data assimilation with QUBO-based quantum optimization.

    \item We introduce a coarse-to-refined surrogate strategy in which expensive cost functional values are computed only on a coarse grid, while a QUBO-based quantum optimization stage searches over a refined binary-encoded parameter grid.

    \item We formulate the parameter estimation problem so that all ODE integrations remain classical, avoiding the need for quantum simulation of the underlying differential equations.

    \item We obtain parameter estimates directly from measured bitstrings, so the proposed approach does not require quantum state tomography \cite{Kaznady2008QuantumStateTomography}.

    \item We provide theoretical statements and fitting-error diagnostics that clarify how surrogate approximation, QUBO fitting, and refined-grid resolution affect the recovered parameter estimate.

    \item We test the method on four model problems: the SIS and SIR epidemic models, whose trajectories approach stable equilibria under the regimes considered; the Lorenz--63 system in a chaotic regime; and a high-dimensional, multiscale two-layer Lorenz--96 system.
\end{itemize}

The remainder of the paper is organized as follows. In Section~\ref{PE}, we introduce the general parameter estimation formulation and describe the data-assimilation-augmented cost functional construction that serves as the starting point for the proposed method. In Section~\ref{PEQC}, we present the proposed quantum parameter estimation framework, including coarse-grid cost functional evaluation, continuous quadratic surrogate fitting, refined-grid binary encoding, QUBO construction, Ising mapping, and quantum optimization. We also establish theoretical results describing how surrogate approximation error and refined-grid resolution affect the recovered parameter estimate. In Section~\ref{models}, we present numerical experiments for the SIS, SIR, Lorenz--63, and two-layer Lorenz--96 systems using QAOA on an IBM simulator, QAOA on the IBM Kingston QPU, and SQA.  Through these examples, we assess the method across epidemiological, chaotic, and high-dimensional multiscale dynamical systems, with particular attention to partial observations and refined-grid parameter recovery. Finally, in Section~\ref{D}, we summarize the main findings and outline directions for future work.

\section{Preliminaries}
\label{PE}

This section introduces the general dynamical-system and observation setting
used throughout the paper. It then presents the
data-assimilation-augmented parameter estimation formulation developed in
\cite{Ahmad2025DA}, which serves as the starting point for the proposed quantum
optimization framework.

Consider a nonlinear dynamical system governed by the ordinary differential equation
\begin{equation}
\dot{x}(t)=F(x(t),\theta), 
\qquad 
x(0)=x_0,
\label{eq:general_ode}
\end{equation}
where \(x(t)\in\mathbb{R}^d\) denotes the state vector,
\(\theta\in\Theta\subset\mathbb{R}^{d_\theta}\) is the unknown parameter vector,
and \(F:\mathbb{R}^d\times\Theta\to\mathbb{R}^d\) is assumed to be sufficiently smooth.
Let the observations be defined through the linear observation operator
\begin{equation}
y(t)=Hx(t),
\label{eq:obs_op}
\end{equation}
where \(y(t)\in\mathbb{R}^m\), \(m\) is the number of observed quantities, and
\(H\in\mathbb{R}^{m\times d}\) maps the full state to the observation space.
In general, \(H\) may select individual state components or form linear combinations
of the state variables. In the numerical examples considered in this work,
\(H\) is a coordinate projection that selects the observed state components.
Observations are available at times \(\{t_k\}_{k=1}^{N_t}\) and are denoted by
\(y^{\mathrm{data}}(t_k)\). The goal of parameter estimation is to determine
parameter values for which the model output agrees with the observed data.

\paragraph{\textbf{Data--Assimilation--Augmented Parameter Estimation}}
 
\label{PECC}

The parameter estimation approach developed in
\cite{Ahmad2025DA} uses a nudged dynamical system to stabilize the estimation
process by incorporating observational data through a feedback term. For a candidate parameter vector \(\tilde{\theta}\), the nudged system is defined by
\begin{equation}
\dot{\tilde{x}}(t)
=
F(\tilde{x}(t),\tilde{\theta})
+
\mu\,H^T\!\left(
y^{\mathrm{data}}(t)-H(\tilde{x}(t))
\right),
\label{eq:nudged_system}
\end{equation}
where \(\mu>0\) is the nudging coefficient. The initial condition for the nudged system need not coincide with the true initial condition. Instead, the nudging term drives the observed components of the nudged trajectory toward the available data.

Using the nudged trajectory, we define the data-misfit cost functional
\begin{equation}
C_{\mathrm{nudge}}(\tilde{\theta};\mu)
=
\sum_{k=1}^{N_t}
\left\|
y^{\mathrm{data}}(t_k)
-
H\!\bigl(\tilde{x}(t_k;\tilde{\theta},\mu)\bigr)
\right\|^2,
\label{eq:cost_nudge}
\end{equation}
where \(N_t\) is the number of observation times. The data-assimilation-augmented parameter estimation problem is then formulated as
\begin{equation}
\tilde{\theta}^{\ast}
\in
\arg\min_{\tilde{\theta}\in\Theta_{\mathrm{s}}}
C_{\mathrm{nudge}}(\tilde{\theta};\mu),
\label{eq:classical_min}
\end{equation}
where \(\Theta_{\mathrm{s}}\subset\Theta\) is a bounded parameter search domain over which the minimization is performed. The resulting data-assimilation-augmented parameter estimation procedure is summarized in Algorithm~\ref{algo1}.

\begin{algorithm}[h]
\caption{Data--Assimilation--Augmented Parameter Estimation}
\label{algo1}
\begin{algorithmic}[1]
\Require Observation times \(\{t_k\}_{k=1}^{N_t}\), data \(\{y^{\mathrm{data}}(t_k)\}_{k=1}^{N_t}\), bounded parameter search domain \(\Theta_{\mathrm{s}}\subset\Theta\), nudging coefficient \(\mu>0\), initial guess \(\tilde{\theta}^{(0)}\), stopping tolerance \(\varepsilon\)
\State Define the nudged ODE system \eqref{eq:nudged_system} with parameter vector \(\tilde{\theta}\) and nudging coefficient \(\mu\)
\State Define the cost functional \(C_{\mathrm{nudge}}(\tilde{\theta};\mu)\) in \eqref{eq:cost_nudge}
\State Initialize the optimizer at \(\tilde{\theta}^{(0)}\)
\While{optimizer has not converged}
    \State Propose a candidate parameter vector \(\tilde{\theta}\)
    \State Solve the nudged system \eqref{eq:nudged_system} to obtain \(\tilde{x}(t_k;\tilde{\theta},\mu)\)
    \State Evaluate \(C_{\mathrm{nudge}}(\tilde{\theta};\mu)\)
    \State Update the optimizer state
\EndWhile 
\State \textbf{return} \(\tilde{\theta}^{\ast}\)
\end{algorithmic}
\end{algorithm}

In practice, the minimization in \eqref{eq:classical_min} may be performed using a numerical optimization method such as the Nelder--Mead simplex method \cite{nelder1965simplex}. Each evaluation of the cost functional requires solving the nudged ODE system forward in time, so repeated numerical integration is the dominant computational cost. For further details on this data-assimilation-augmented parameter estimation method, we refer the reader to \cite{Ahmad2025DA}.

\section{Proposed Quantum Optimization Framework}
\label{PEQC}

We propose a quantum-enabled extension of the data-assimilation-augmented optimization framework by reformulating the minimization of \(C_{\mathrm{nudge}}(\tilde{\theta};\mu)\) as a combinatorial optimization problem. In contrast to a direct grid-based formulation, we distinguish between a coarse grid used for model-based cost functional evaluations and a refined grid used for the QUBO-based quantum search.

Let
\begin{equation}
\Theta_{\mathrm{s}}
=
\prod_{j=1}^{d_\theta}
[\theta_j^{\min},\theta_j^{\max}]
\label{eq:Theta_general}
\end{equation}
denote the bounded parameter search domain. Let \(M_c\) and \(M_f\) denote the number of coarse and refined grid points per parameter, respectively, with \(M_f\ge M_c\). The coarse grid for the \(j\)-th parameter is given by
\begin{equation}
\tilde{\theta}_{j,c}^{(i)}
=
\theta_j^{\min}
+
\frac{i}{M_c-1}
\bigl(\theta_j^{\max}-\theta_j^{\min}\bigr),
\qquad
i=0,\ldots,M_c-1.
\label{eq:coarse_grid_param}
\end{equation}
For \(d_\theta\) unknown parameters, this gives \(L_c=M_c^{d_\theta}\) total coarse--grid points. Each coarse grid point defines a candidate parameter vector \(\tilde{\theta}_{c,\ell}\), \(\ell=1,\ldots,L_c\). For each candidate, we solve the nudged system \eqref{eq:nudged_system} and compute $C_\ell
=
C_{\mathrm{nudge}}(\tilde{\theta}_{c,\ell};\mu).$
Thus, the expensive forward integrations and cost functional evaluations are performed only on the coarse grid.

The resulting coarse-grid cost functional values are then used to fit a continuous quadratic surrogate over the parameter domain. Let $\theta
=
(\theta_1,\ldots,\theta_{d_\theta})\in\Theta_{\mathrm{s}}$
denote a generic parameter vector, where \(\theta_j\) is the \(j\)-th unknown parameter. For example, in the SIR model, \(\theta=(\beta,\gamma)\), while in the Lorenz--63 model, \(\theta=(\sigma,\rho,\beta)\). We approximate the cost functional by a quadratic function of the parameter components:
\begin{equation}
\widehat{C}(\theta)
=
c_0
+
\sum_{j=1}^{d_\theta} c_j \theta_j
+
\sum_{1\le i\le j\le d_\theta} c_{ij}\theta_i\theta_j .
\label{eq:continuous_quadratic_surrogate}
\end{equation}
Here, \(c_0\) is the constant coefficient, \(c_j\) are the linear coefficients, and \(c_{ij}\) are the quadratic interaction coefficients. The terms with \(i=j\) represent squared parameter contributions, \(\theta_i^2\), while the terms with \(i<j\) represent pairwise interactions between different parameters. The coefficients
\(
\{c_0,\ c_j,\ c_{ij}\}
\)
are determined by weighted least-squares fitting using the coarse-grid data $\left\{
\left(\tilde{\theta}_{c,\ell},C_\ell\right)
\right\}_{\ell=1}^{L_c}.$
The use of least-squares fitting is appropriate in this setting for two related reasons. First, the cost functional is itself defined as a sum of squared residuals, so a least-squares surrogate is consistent with the structure of the underlying misfit functional. Second, the quadratic surrogate is linear in the unknown coefficients \(\{c_0,c_j,c_{ij}\}\), which means that the surrogate coefficients can be computed by solving a linear least-squares problem rather than a nonlinear optimization problem.

To define the weights, let $C_{\min}=\min_{1\le \ell\le L_c} C_\ell,
C_{\max}=\max_{1\le \ell\le L_c} C_\ell .$
We first normalize the coarse-grid cost functional values by
\[
\widetilde C_\ell
=
\frac{C_\ell-C_{\min}}
{C_{\max}-C_{\min}+\delta},
\]
where \(\delta>0\) is a small numerical constant used to avoid division by zero. The least-squares weights are then chosen as $w_\ell=\exp(-\lambda \widetilde C_\ell).$
In the numerical experiments, we use \(\lambda=8\). This weighting assigns the largest weights to the smallest cost functional values and progressively downweights regions with large cost functional values. Thus, the fitted quadratic surrogate is biased toward accuracy in the region with small cost functional values where the parameter minimizer is expected to lie.

Equivalently, let
\[
q
=
1+d_\theta+\frac{d_\theta(d_\theta+1)}{2}
\]
denote the total number of constant, linear, squared, and pairwise interaction
features in the quadratic surrogate. Define the quadratic feature vector by $\phi(\theta)
=
\left(
1,\,
\theta_1,\ldots,\theta_{d_\theta},\,
\{\theta_i\theta_j\}_{1\le i\le j\le d_\theta}
\right)^T
\in\mathbb{R}^q.$
Thus, \(\phi(\theta)\) contains the constant term, all linear terms, all squared
terms, and all pairwise interaction terms. Let
\(\Phi_c\in\mathbb{R}^{L_c\times q}\) denote the coarse-grid design matrix,
whose \(\ell\)-th row is $\phi(\tilde{\theta}_{c,\ell})^T.$
Let \(C^{\mathrm{sc}}\in\mathbb{R}^{L_c}\) denote the vector of normalized
coarse-grid cost functional values, with entries $C^{\mathrm{sc}}_\ell=\widetilde C_\ell.$
The coefficient vector \(c\in\mathbb{R}^q\) is computed from
\[
c
=
\arg\min_{\bar c\in\mathbb{R}^q}
\left\|
W_c^{1/2}
\left(
\Phi_c \bar c-C^{\mathrm{sc}}
\right)
\right\|_2^2,
\]
where $W_c=\operatorname{diag}(w_1,\ldots,w_{L_c}).$ This gives the fitted continuous surrogate $\widehat C(\theta)=c^T\phi(\theta).$

Next, we introduce the refined grid
\begin{equation}
\tilde{\theta}_{j,f}^{(i)}
=
\theta_j^{\min}
+
\frac{i}{M_f-1}
\bigl(\theta_j^{\max}-\theta_j^{\min}\bigr),
\qquad
i=0,\ldots,M_f-1.
\label{eq:refined_grid_param}
\end{equation}
The surrogate \(\widehat{C}\) is evaluated on this refined grid, producing a refined discrete cost functional landscape without additional nudged ODE solves. Thus, the coarse grid controls the number of expensive model integrations, while the refined grid controls the resolution of the final binary search.

Each refined grid point is encoded by a binary decision vector \(z\in\{0,1\}^n\). If $b=\left\lceil \log_2 M_f\right\rceil$
bits are used per parameter, then $n=d_\theta b .$
For the \(j\)-th parameter, let $z^{(j)}
=
\left(z_{j,0},z_{j,1},\ldots,z_{j,b-1}\right)
\in \{0,1\}^{b}$
denote the block of bits assigned to that parameter. The corresponding refined-grid index is
\begin{equation}
i_j(z)
=
\sum_{r=0}^{b-1}2^{b-1-r} z_{j,r},
\qquad
j=1,\ldots,d_\theta .
\label{eq:binary_index_general}
\end{equation}
The decoded refined-grid parameter value is therefore
\begin{equation}
\tilde{\theta}_{j,f}(z)
=
\theta_j^{\min}
+
\frac{i_j(z)}{M_f-1}
\bigl(\theta_j^{\max}-\theta_j^{\min}\bigr),
\qquad
j=1,\ldots,d_\theta .
\label{eq:binary_decode_general}
\end{equation}
Thus, $\tilde{\theta}_f(z)
=
\left(
\tilde{\theta}_{1,f}(z),
\ldots,
\tilde{\theta}_{d_\theta,f}(z)
\right)$
denotes the refined-grid parameter vector decoded from \(z\). When \(M_f\) is a power of two, each \(b\)-bit block corresponds to a valid refined-grid index. In the numerical experiments below, \(M_f=32\), so each parameter is represented by five binary variables.

The refined surrogate values are then represented by a quadratic binary model of the form
\begin{equation}
E(z)
=
a_0
+
\sum_{i=1}^{n} a_i z_i
+
\sum_{1\le i<j\le n} a_{ij} z_i z_j,
\qquad
z_i\in\{0,1\}.
\label{eq:qubo_general}
\end{equation}
Here, \(z=(z_1,\ldots,z_n)\) is the binary vector encoding a refined-grid parameter vector, \(a_0\) is the constant term, \(a_i\) are the linear QUBO coefficients, and \(a_{ij}\) are the pairwise quadratic interaction coefficients between binary variables. These coefficients are not model parameters; they are coefficients of the binary surrogate used to represent the refined surrogate landscape on the encoded parameter grid.

The QUBO coefficients
\(
\{a_0,\ a_i,\ a_{ij}\}
\)
are obtained by weighted least-squares fitting so that $E(z)
\approx
\widehat{C}\bigl(\tilde{\theta}_f(z)\bigr)$
over the refined grid. After the continuous surrogate is evaluated on the
refined grid, the refined surrogate values are shifted and normalized to
\([0,1]\). If \(\widehat C_f(z)\) denotes the normalized refined-grid
surrogate value associated with bitstring \(z\), then the QUBO
least-squares weights are chosen as $w_q(z)
=
\exp\bigl(-\lambda \widehat C_f(z)\bigr),$
again with \(\lambda=8\).

Because the continuous surrogate is quadratic in the parameters and the
binary decoding in \eqref{eq:binary_decode_general} is affine in the binary
variables, the refined surrogate admits an exact QUBO representation. The
weighted least-squares step is therefore used as a numerical procedure for
recovering the coefficients of this binary quadratic representation. As a
result, the QUBO fitting error is expected to be near machine precision, as
observed in the numerical experiments.

The continuous surrogate coefficients in
\eqref{eq:continuous_quadratic_surrogate} are fitted using the normalized
coarse-grid cost functional values, whereas the QUBO coefficients in
\eqref{eq:qubo_general} represent the normalized continuous surrogate on the
refined binary grid. This produces the QUBO problem used in the final quantum
optimization stage.

Using the standard transformation from binary variables to Ising spin variables,
\[
z_i=\frac{1-s_i}{2},
\qquad
z_i\in\{0,1\}, \qquad s_i\in\{-1,+1\},
\]
the QUBO cost functional is mapped to an Ising Hamiltonian
\begin{equation}
\hat{H}_C
=
\kappa I
+
\sum_{i=1}^{n} h_i Z_i
+
\sum_{1\le i<j\le n}J_{ij}Z_iZ_j,
\label{eq:ising_general}
\end{equation}
where \(Z_i\) denotes the Pauli-\(Z\) operator acting on qubit \(i\). Here, \(\kappa\) is a constant energy shift, \(h_i\) are the one-qubit Ising coefficients, and \(J_{ij}\) are the two-qubit coupling coefficients. These Ising coefficients are obtained algebraically from the QUBO coefficients \(\{a_0,a_i,a_{ij}\}\). The constant shift \(\kappa I\) does not affect the minimizer, but it is included in \eqref{eq:ising_general} for completeness. 

The final step is to search for a low-energy configuration of the Ising Hamiltonian
\(\hat{H}_C\) using a quantum optimizer \(\mathcal{S}\). In this work, the
optimizer is treated as a solver for the binary energy landscape defined by the
QUBO/Ising formulation. The low-energy bitstring obtained from this optimization
step is decoded through \eqref{eq:binary_decode_general} to obtain the
corresponding refined-grid parameter estimate \(\tilde{\theta}_Q\).

This formulation is independent of the particular optimization procedure used in the final QUBO/Ising search. In the numerical experiments, we use QAOA on an IBM simulator, QAOA on the IBM Kingston QPU, and SQA.
After optimization, the solvers return candidate bitstrings. The returned low-energy bitstring is decoded to obtain a refined-grid parameter vector and hence the quantum-assisted estimate \(\tilde{\theta}_Q\). The proposed hybrid procedure is summarized in Algorithm~\ref{alg:hybrid_estimation}.

\begin{algorithm}[h]
\caption{Proposed Quantum Optimization Framework for
Data-Assimilation-Augmented Parameter Estimation}
\label{alg:hybrid_estimation}
\begin{algorithmic}[1]
\Require Observation times \(\{t_k\}_{k=1}^{N_t}\), data \(\{y^{\mathrm{data}}(t_k)\}_{k=1}^{N_t}\), bounded parameter search domain \(\Theta_{\mathrm{s}}\subset\Theta\), nudging coefficient \(\mu>0\), coarse grid size \(M_c\), refined grid size \(M_f\), QUBO/Ising optimizer \(\mathcal{S}\)
\State Discretize \(\Theta_{\mathrm{s}}\) using an \(M_c\)-point coarse grid per parameter
\For{each coarse grid point \(\tilde{\theta}_{c,\ell}\)}
    \State Solve the nudged system \eqref{eq:nudged_system} with parameter vector \(\tilde{\theta}_{c,\ell}\) and gain \(\mu\)
    \State Compute \(C_\ell=C_{\mathrm{nudge}}(\tilde{\theta}_{c,\ell};\mu)\)
\EndFor
\State Normalize the coarse-grid cost functional values to obtain \(\{\widetilde C_\ell\}\)
\State Define weights \(w_\ell=\exp(-\lambda\widetilde C_\ell)\)
\State Fit the continuous quadratic surrogate \(\widehat{C}\) in \eqref{eq:continuous_quadratic_surrogate} by weighted least squares
\State Monitor the continuous surrogate fitting errors \(\mathrm{RMSE}_{\mathrm{LS}}\) and \(e_{\infty,\mathrm{LS}}\)
\State Discretize \(\Theta_{\mathrm{s}}\) using an \(M_f\)-point refined grid per parameter
\State Evaluate \(\widehat{C}\) on the refined grid
\State Encode refined grid points by bitstrings \(z\in\{0,1\}^n\), where \(n=d_\theta\lceil \log_2 M_f\rceil\)
\State Normalize the refined surrogate values on the refined grid
\State Define QUBO weights
\(
w_q(z)=\exp(-\lambda\widehat C_f(z))
\)
\State Recover the coefficients of the QUBO representation
\(E(z)\) in \eqref{eq:qubo_general} by weighted least squares
\State Monitor the numerical QUBO representation error
\(\mathrm{MSE}_{\mathrm{QUBO}}\)
\State Map \(E(z)\) to an Ising Hamiltonian \(\hat{H}_C\)
\State Use the selected QUBO/Ising optimizer \(\mathcal{S}\) to approximately minimize the energy of \(\hat{H}_C\) and obtain a bitstring \(z^\ast\)
\State Decode \(z^\ast\) to obtain the corresponding parameter estimate \(\tilde{\theta}_Q\)
\State \Return \(\tilde{\theta}_Q\)
\end{algorithmic}
\end{algorithm}

\paragraph{\textbf{Least-Squares and QUBO Fitting Errors}}
The continuous quadratic surrogate introduces an approximation error because the cost functional is generally not exactly quadratic in the unknown parameters. We measure this error on the coarse training grid using the weighted root-mean-square error
\[
\mathrm{RMSE}_{\mathrm{LS}}
=
\left(
\frac{
\sum_{\ell=1}^{L_c}
w_\ell
\left(
\widehat C(\tilde{\theta}_{c,\ell})-\widetilde C_\ell
\right)^2
}{
\sum_{\ell=1}^{L_c} w_\ell
}
\right)^{1/2}.
\]
We monitor the maximum coarse-grid fitting error $e_{\infty,\mathrm{LS}}
=
\max_{1\le \ell\le L_c}
\left|
\widehat C(\tilde{\theta}_{c,\ell})-\widetilde C_\ell
\right|.$
These quantities measure how accurately the continuous quadratic surrogate approximates the normalized cost functional values used for fitting.

After evaluating the continuous surrogate on the refined binary grid, the
QUBO coefficients are recovered through weighted least squares. Since the
continuous surrogate is quadratic in the parameters and the binary parameter
decoding is affine, an exact binary quadratic representation exists. Let
\(\mathcal{Z}\subseteq\{0,1\}^n\) denote the set of bitstrings corresponding
to the refined-grid parameter points. The numerical accuracy of the recovered
QUBO representation is measured by
\[
\mathrm{MSE}_{\mathrm{QUBO}}
=
\frac{1}{|\mathcal{Z}|}
\sum_{z\in\mathcal{Z}}
\left(E(z)-\widehat C_f(z)\right)^2,
\]
where \(|\mathcal{Z}|\) is the number of refined-grid bitstrings,
\(\widehat C_f(z)\) is the normalized value of
\(\widehat C\bigl(\tilde{\theta}_f(z)\bigr)\), and \(E(z)\) is the recovered
QUBO energy. Thus, \(\mathrm{MSE}_{\mathrm{QUBO}}\) measures numerical
coefficient-recovery error rather than a separate model-approximation error.
 
\paragraph{\textbf{Conditioning of the Least-Squares Fits}}
A potential numerical challenge in the surrogate construction is the conditioning of the least-squares design matrices. Both the continuous quadratic surrogate and the QUBO surrogate are fitted using feature matrices whose columns may be correlated. For example, the continuous surrogate includes linear, squared, and pairwise interaction terms in the parameters, while the QUBO surrogate includes binary variables and their pairwise products. On a structured grid, and especially after weighting the low-cost region more heavily, these features can exhibit multicollinearity, so the corresponding design matrix can become ill-conditioned or, in degenerate cases, rank deficient \cite{belsley1980regression,golub2013matrix,bjorck1996numerical}. Such multicollinearity can make individual fitted coefficients sensitive to perturbations and difficult to interpret. However, in predictive or optimization-oriented uses of least-squares models, multicollinearity does not necessarily imply poor performance of the fitted response surface, since its main effect is often on the stability and interpretation of individual fitted coefficients rather than on the overall fitted response \cite{obrien2007caution,vatcheva2016multicollinearity}. In the present framework, the fitted coefficients are not interpreted as physical or model parameters; they are used only to represent the surrogate cost functional landscape and identify low-cost regions. In the present experiments, the fitted surrogates appeared numerically stable, as indicated by the monitored values of \(\mathrm{RMSE}_{\mathrm{LS}}\), \(e_{\infty,\mathrm{LS}}\), and \(\mathrm{MSE}_{\mathrm{QUBO}}\). Thus, although multicollinearity is a possible challenge of the least-squares fitting step, it did not prevent stable recovery of the low-cost parameter region in the examples considered in this work.

The construction above separates expensive model evaluation from the resolution
of the binary optimization problem. The nudged system is solved only on the
coarse grid, while the QUBO/Ising optimization stage searches over the refined
grid through the learned surrogate. Since the quantum optimizer minimizes the
QUBO representation rather than the original cost functional in
\eqref{eq:cost_nudge}, it is important to understand how closely the minimizer
of the surrogate problem approximates the minimizer of the target discrete cost
functional.

Here, \(C\) denotes the target cost functional on the refined binary grid,
while \(E\) denotes the QUBO representation minimized by the quantum optimizer.
In the present quadratic construction, the dominant approximation arises from
fitting the continuous quadratic surrogate to the coarse-grid cost functional values. The subsequent QUBO
coefficient-recovery step reproduces the refined surrogate up to numerical
roundoff, as measured by \(\mathrm{MSE}_{\mathrm{QUBO}}\). Therefore, the
uniform error parameter \(\varepsilon\) in
Theorem~\ref{thm:surrogate-stability} primarily reflects the continuous
surrogate approximation error, together with any numerical error introduced
when recovering the QUBO coefficients.

\begin{theorem}
\label{thm:surrogate-stability}
Let \(\mathcal{Z}\) be a nonempty finite set. Let \(C:\mathcal{Z}\to\mathbb{R}\) denote the target refined-grid cost functional and let \(E:\mathcal{Z}\to\mathbb{R}\) denote a surrogate cost functional. Assume that
\begin{equation}
\sup_{z\in\mathcal{Z}} |E(z)-C(z)| \le \varepsilon
\label{eq:uniform-bound}
\end{equation}
for some \(\varepsilon\ge 0\). Let $z_C\in\arg\min_{z\in\mathcal{Z}} C(z), 
z_E\in\arg\min_{z\in\mathcal{Z}} E(z).$
Then
\begin{equation}
C(z_E)
\le
\min_{z\in\mathcal{Z}} C(z) + 2\varepsilon .
\label{eq:near-opt}
\end{equation}
Moreover, assume \(z_C\) is the unique minimizer of \(C\) on \(\mathcal{Z}\), and define the optimality gap
\begin{equation}
\Delta
:=
\min_{z\in\mathcal{Z}\setminus\{z_C\}}
\bigl(C(z)-C(z_C)\bigr).
\label{eq:gap}
\end{equation}
If \(\Delta>0\) and \(\varepsilon<\Delta/2\), then \(z_C\) is the unique minimizer of \(E\) on \(\mathcal{Z}\), and hence
\begin{equation}
z_E=z_C .
\label{eq:exact-recovery}
\end{equation}
\end{theorem}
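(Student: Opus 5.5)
The argument is a direct sandwich estimate using only the uniform bound \eqref{eq:uniform-bound} and the defining minimality of \(z_E\) and \(z_C\); no earlier result of the paper is needed. Finiteness and nonemptiness of \(\mathcal{Z}\) guarantee that both argmin sets are nonempty, so \(z_C\) and \(z_E\) are well defined and \(\min_{z}C(z)=C(z_C)\).

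For the near-optimality bound \eqref{eq:near-opt}, I would chain three inequalities:
\[
C(z_E)\le E(z_E)+\varepsilon\le E(z_C)+\varepsilon\le C(z_C)+2\varepsilon .
\]
The first inequality uses \eqref{eq:uniform-bound} at \(z_E\). The second uses the minimality of \(z_E\) for \(E\). The third uses \eqref{eq:uniform-bound} at \(z_C\).

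For exact recovery, I would fix an arbitrary \(z\in\mathcal{Z}\setminus\{z_C\}\). By the definition of the gap \eqref{eq:gap}, \(C(z)\ge C(z_C)+\Delta\). Applying the uniform bound at both points then gives
\[
E(z)\ge C(z)-\varepsilon\ge C(z_C)+\Delta-\varepsilon\ge E(z_C)+\Delta-2\varepsilon .
\]
Since \(\varepsilon<\Delta/2\), we have \(\Delta-2\varepsilon>0\), so \(E(z)>E(z_C)\) strictly. This holds for every \(z\ne z_C\), so \(z_C\) is the unique minimizer of \(E\). Any \(z_E\) in the argmin of \(E\) must therefore equal \(z_C\), which is \eqref{eq:exact-recovery}.

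There is no substantive obstacle; the only points needing care are the following. The minimum in \eqref{eq:gap} is taken over a finite set and is therefore attained. If \(\mathcal{Z}=\{z_C\}\), this set is empty; in that case the gap is conventionally \(+\infty\), and the conclusion is trivial. I would also stress that the strict inequality \(\varepsilon<\Delta/2\) is exactly what turns the nonstrict chain into a strict inequality, and hence into uniqueness.
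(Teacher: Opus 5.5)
Your proposal is correct and follows the paper's proof essentially line by line: the same two-sided sandwich from the uniform bound, the same three-inequality chain for near-optimality, and the same gap argument giving $E(z)-E(z_C)\ge\Delta-2\varepsilon>0$ for all $z\neq z_C$. Your remark on the degenerate case $\mathcal{Z}=\{z_C\}$, where the minimum defining $\Delta$ is taken over an empty set, addresses a point the paper leaves implicit.
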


\begin{proof}
From Eq.~\eqref{eq:uniform-bound}, for all \(z\in\mathcal{Z}\),
\begin{equation}
C(z)-\varepsilon
\le
E(z)
\le
C(z)+\varepsilon .
\label{eq:sandwich}
\end{equation}
Since \(z_E\) minimizes \(E\) on \(\mathcal{Z}\), \(E(z_E)\le E(z_C)\). Using Eq.~\eqref{eq:sandwich} at \(z=z_E\) and \(z=z_C\), we obtain
\[
C(z_E)-\varepsilon
\le
E(z_E)
\le
E(z_C)
\le
C(z_C)+\varepsilon.
\]
Thus, $C(z_E)\le C(z_C)+2\varepsilon.$
Since \(z_C\in\arg\min_{\mathcal{Z}} C\), we have \(C(z_C)=\min_{z\in\mathcal{Z}}C(z)\), which proves Eq.~\eqref{eq:near-opt}.

Now assume that \(z_C\) is the unique minimizer of \(C\) and that \(\Delta\) is defined by Eq.~\eqref{eq:gap}. Then, for every \(z\in\mathcal{Z}\setminus\{z_C\}\),
\begin{equation}
C(z)\ge C(z_C)+\Delta .
\label{eq:gap-ineq}
\end{equation}
Using Eq.~\eqref{eq:sandwich} and Eq.~\eqref{eq:gap-ineq}, for \(z\neq z_C\),
\[
E(z)
\ge
C(z)-\varepsilon
\ge
C(z_C)+\Delta-\varepsilon,
\qquad
E(z_C)
\le
C(z_C)+\varepsilon.
\]
Therefore, $E(z)-E(z_C)
\ge
\Delta-2\varepsilon.$
If \(\varepsilon<\Delta/2\), then \(\Delta-2\varepsilon>0\), and hence \(E(z)>E(z_C)\) for all \(z\neq z_C\). Thus, \(z_C\) is the unique minimizer of \(E\) on \(\mathcal{Z}\), and Eq.~\eqref{eq:exact-recovery} follows.
\end{proof}

The discretization of the parameter domain introduces an approximation error because the true minimizer of the continuous cost functional may not lie exactly on the refined grid. Theorem~\ref{thm:grid-discretization-short} quantifies how this error depends on the refined grid spacing. Its assumptions are local regularity conditions near an identifiable minimizer and are intended to describe the effect of grid resolution, rather than to provide a global characterization of arbitrary nonconvex cost functional landscapes.

\noindent \begin{theorem}
\label{thm:grid-discretization-short}
Let
\[
\Theta_{\mathrm{s}}
=
\prod_{j=1}^{d_\theta}
[\theta_j^{\min},\theta_j^{\max}]
\subset \Theta
\subset\mathbb{R}^{d_\theta},
\]
and let \(C:\Theta_{\mathrm{s}}\to\mathbb{R}\) be continuously differentiable with a unique minimizer
\(\theta^\ast\in\operatorname{int}(\Theta_{\mathrm{s}})\). Assume there exist constants
\(\alpha,L>0\) such that, for all \(\theta\in\Theta_{\mathrm{s}}\),
\[
C(\theta)
\ge
C(\theta^\ast)
+
\frac{\alpha}{2}
\lVert \theta-\theta^\ast\rVert^2,
\qquad
C(\theta)
\le
C(\theta^\ast)
+
\frac{L}{2}
\lVert \theta-\theta^\ast\rVert^2 .
\]
Let \(\Theta_{h_f}\subset\Theta_{\mathrm{s}}\) be the uniform refined grid induced by
\eqref{eq:refined_grid_param}, and set
\[
h_f
:=
\max_{1\le j\le d_\theta}
\frac{\theta_j^{\max}-\theta_j^{\min}}{M_f-1}.
\]
Let $\theta_{h_f}\in\arg\min_{\theta\in\Theta_{h_f}} C(\theta),$
and assume $\operatorname{dist}_\infty(\theta^\ast,\partial\Theta_{\mathrm{s}})\ge h_f/2.$
Then
\[
0
\le
C(\theta_{h_f})-C(\theta^\ast)
\le
\frac{L}{8}\,d_\theta\,h_f^2,
\qquad
\lVert\theta_{h_f}-\theta^\ast\rVert
\le
\sqrt{\frac{L}{\alpha}}\,
\frac{\sqrt{d_\theta}}{2}\,h_f .
\]
\end{theorem}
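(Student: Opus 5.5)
The plan is to compare $\theta_{h_f}$ with the grid point nearest to $\theta^\ast$, and then turn the resulting cost bound into a distance bound using the quadratic lower bound.

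\textbf{Step 1: a nearby grid point.} For each coordinate $j$, the refined grid in \eqref{eq:refined_grid_param} covers $[\theta_j^{\min},\theta_j^{\max}]$ with spacing $h_j=(\theta_j^{\max}-\theta_j^{\min})/(M_f-1)\le h_f$. Since $\theta_j^\ast$ lies in this interval, rounding it to the nearest grid node gives $\bar\theta_j$ with $|\bar\theta_j-\theta_j^\ast|\le h_j/2\le h_f/2$. The endpoints are grid nodes and $\theta^\ast$ is interior, so $\bar\theta=(\bar\theta_1,\dots,\bar\theta_{d_\theta})\in\Theta_{h_f}$. The assumption $\operatorname{dist}_\infty(\theta^\ast,\partial\Theta_{\mathrm{s}})\ge h_f/2$ is not strictly needed here, but it confirms that this rounding stays inside the box. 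Summing over coordinates gives
\[
\lVert\bar\theta-\theta^\ast\rVert^2\le \tfrac{d_\theta}{4}h_f^2 .
\]

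\textbf{Step 2: the cost bound.} The lower bound $C(\theta_{h_f})\ge C(\theta^\ast)$ holds because $\theta^\ast$ is the global minimizer on $\Theta_{\mathrm{s}}\supseteq\Theta_{h_f}$. For the upper bound, $\theta_{h_f}$ minimizes $C$ over the grid, so $C(\theta_{h_f})\le C(\bar\theta)$. Applying the quadratic upper bound at $\bar\theta$ together with Step 1 gives
\[
C(\theta_{h_f})-C(\theta^\ast)\le \tfrac{L}{2}\cdot\tfrac{d_\theta}{4}h_f^2=\tfrac{L}{8}d_\theta h_f^2 .
\]

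\textbf{Step 3: the distance bound.} The quadratic lower bound at $\theta_{h_f}$ gives
\[
\tfrac{\alpha}{2}\lVert\theta_{h_f}-\theta^\ast\rVert^2\le C(\theta_{h_f})-C(\theta^\ast)\le \tfrac{L}{8}d_\theta h_f^2 .
\]
Rearranging, $\lVert\theta_{h_f}-\theta^\ast\rVert^2\le \tfrac{L}{4\alpha}d_\theta h_f^2$. Taking square roots yields $\sqrt{L/\alpha}\,\tfrac{\sqrt{d_\theta}}{2}h_f$.

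The only delicate point is Step 1: checking that the nearest node lies in $\Theta_{h_f}$ and that the per-coordinate bound uses $h_f/2$ rather than $h_f$. Continuous differentiability and uniqueness of the minimizer are not needed beyond guaranteeing that $\theta^\ast$ is well defined.
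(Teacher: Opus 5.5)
Your proposal is correct and follows the paper's proof essentially step for step: find a grid point within $h_f/2$ of $\theta^\ast$ in each coordinate, bound its cost with the quadratic upper bound, pass to $\theta_{h_f}$ by grid optimality, and convert the cost bound into a distance bound with the quadratic lower bound. Your side remark is also right. The paper invokes $\operatorname{dist}_\infty(\theta^\ast,\partial\Theta_{\mathrm{s}})\ge h_f/2$ to produce the nearby grid point, but each coordinate grid contains both interval endpoints, so rounding to the nearest node already works for any $\theta^\ast\in\Theta_{\mathrm{s}}$.
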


\begin{proof}
The assumption $\operatorname{dist}_\infty
\bigl(\theta^\ast,\partial\Theta_{\mathrm{s}}\bigr)
\ge \frac{h_f}{2}$
ensures that \(\theta^\ast\) lies at least half a refined-grid spacing away
from the boundary of \(\Theta_{\mathrm{s}}\). Therefore, there exists a grid
point \(\tilde{\theta}\in\Theta_{h_f}\) such that
\[
|\tilde{\theta}_j-\theta_j^\ast|
\le \frac{h_f}{2},
\qquad
j=1,\ldots,d_\theta.
\]
Hence,
\[
\lVert \tilde{\theta}-\theta^\ast\rVert^2
\le
d_\theta(h_f/2)^2
=
\frac{d_\theta h_f^2}{4}.
\]
Using the upper quadratic bound gives
\[
C(\tilde{\theta})-C(\theta^\ast)
\le
\frac{L}{2}
\frac{d_\theta h_f^2}{4}
=
\frac{L}{8}d_\theta h_f^2 .
\]
By optimality of \(\theta_{h_f}\) on \(\Theta_{h_f}\), we have $C(\theta_{h_f})\le C(\tilde{\theta}).$
Since \(\Theta_{h_f}\subset\Theta_{\mathrm{s}}\) and \(\theta^\ast\) is the minimizer of \(C\) on
\(\Theta_{\mathrm{s}}\), we also have $C(\theta_{h_f})\ge C(\theta^\ast).$
Therefore,
\[
0
\le
C(\theta_{h_f})-C(\theta^\ast)
\le
\frac{L}{8}d_\theta h_f^2 .
\]
The parameter error bound follows by combining this estimate with the lower quadratic growth condition:
\[
\frac{\alpha}{2}
\lVert \theta_{h_f}-\theta^\ast\rVert^2
\le
C(\theta_{h_f})-C(\theta^\ast)
\le
\frac{L}{8}d_\theta h_f^2 .
\]
Taking square roots gives
\[
\lVert\theta_{h_f}-\theta^\ast\rVert
\le
\sqrt{\frac{L}{\alpha}}\,
\frac{\sqrt{d_\theta}}{2}\,h_f .
\]
\end{proof}


\section{Numerical Results}
\label{models}

This section evaluates Algorithm~\ref{alg:hybrid_estimation} on four model
problems: the SIS model, the SIR model, the Lorenz--63 system, and the
two-layer Lorenz--96 system. The SIS and SIR examples represent
epidemiological models whose trajectories approach stable equilibria under the
parameter regimes considered. The Lorenz--63 example provides a chaotic test
problem, while the two-layer Lorenz--96 example provides a high-dimensional,
multiscale dissipative system. Together, these examples evaluate the proposed
framework across steady-state, chaotic, and high-dimensional dynamical systems. A sample for is available on this \hyperlink{https://github.com/Jalilahmad4/Quantum-Opt.-for-Parameter-Estimation-in-Dynamical-Systems.git}{GitHub} repository.\footnote{https://github.com/Jalilahmad4/Quantum-Opt.-for-Parameter-Estimation-in-Dynamical-Systems.git}

\paragraph{\textbf{Common Numerical and Optimization Settings}}
In all experiments, synthetic observations are generated by numerically solving
the corresponding dynamical system with prescribed true parameter values at a
finite collection of observation times. The nudged systems are initialized
from prescribed initial states that differ from the reference initial
conditions. During integration of the nudged systems, the discrete observations
are linearly interpolated to provide observation values at the internal time
points selected by the ODE solver. The model-specific reference
initial conditions, observation intervals, observed components, nudging
coefficients, and parameter search domains are stated in the corresponding
subsections.

For each unknown parameter, the cost functional is evaluated
at \(M_c=8\) coarse-grid points. The resulting coarse-grid cost functional
values are shifted and normalized to \([0,1]\) and used to fit a continuous
quadratic surrogate by weighted least squares with weighting parameter
\(\lambda=8\). The continuous surrogate is then evaluated on a refined grid
containing \(M_f=32\) points per unknown parameter without requiring additional
data-assimilation solves. Since \(32=2^5\), each unknown parameter is
represented using five binary variables. The coefficients of the corresponding
refined-grid QUBO representation are recovered by weighted least squares using
the same weighting parameter \(\lambda=8\). The numerical QUBO representation
error \(\mathrm{MSE}_{\mathrm{QUBO}}\) is monitored to verify that the binary
quadratic model reproduces the refined continuous surrogate up to numerical
precision.

The resulting QUBO problems are solved using QAOA on an IBM quantum simulator,
QAOA on the IBM Kingston QPU, and SQA implemented using D-Wave's
\texttt{PathIntegralAnnealingSampler}. The same QAOA implementation and
execution settings are used across all four model problems. For the SQA
experiments, \(5000\) reads and \(2000\) schedule points are used with random
seed \(123\), and the lowest-energy sampled bitstring is selected as the final
solution. For each optimization procedure, the returned low-energy bitstring
is decoded on the refined parameter grid to obtain the corresponding parameter
estimate.

For the two-parameter examples (SIS, SIR, and Lorenz--96), the coarse grid
contains \(8^2=64\) data-assimilation solves and the refined grid contains
\(32^2=1024\) binary-encoded parameter points. Since each parameter is
represented by five binary variables, the refined search uses
\(2\times5=10\) binary variables. For the Lorenz--63 model, the coarse grid
contains \(8^3=512\) data-assimilation solves and the refined grid contains
\(32^3=32768\) binary-encoded parameter points, corresponding to
\(3\times5=15\) binary variables. The common grid sizes, observations, and
binary encodings are summarized in Table~\ref{tab:experiment_summary}.

\begin{table}[h]
\centering
\begin{tabular}{l c c c c}
\hline
Model & Unknowns & Observed data & DA solves & Binary variables \\
\hline
SIS
& \(\beta,\gamma\)
& \(I(t)\)
& \(8^2=64\)
& \(10\) \\
SIR
& \(\beta,\gamma\)
& \(I(t)\)
& \(8^2=64\)
& \(10\) \\
Lorenz--63
& \(\sigma,\rho,\beta\)
& \(x(t)\)
& \(8^3=512\)
& \(15\) \\
Lorenz--96
& \(d_{u,5},d_{u,15}\)
& \(u_5(t),u_{15}(t)\)
& \(8^2=64\)
& \(10\) \\
\hline
\end{tabular}
\caption{Summary of the numerical experiments. In all cases, cost functional values are computed on a coarse parameter
grid, while the refined binary parameter search is represented as a QUBO and
solved using the selected optimization procedures.}
\label{tab:experiment_summary}
\end{table}

\subsection{SIS Model}
\label{SIS Model}

We first consider the susceptible--infected--susceptible (SIS) model \cite{kermack}. The population is divided into susceptible and infected
classes, represented by the population fractions \(S(t)\) and \(I(t)\),
respectively. Thus, $S(t)+I(t)=1.$ The dynamics are
\begin{equation}
\begin{cases}
\dot S = -\beta SI+\gamma I,\\
\dot I = \beta SI-\gamma I,
\end{cases}
\label{eq:sis_model}
\end{equation}
where \(\beta>0\) is the transmission rate and \(\gamma>0\) is the recovery rate.

Synthetic observations are generated using $S(0)=0.99875, I(0)=0.00125,$
with true parameters $\beta_{\mathrm{true}}=0.65,
\gamma_{\mathrm{true}}=0.25.$
The system is solved on \([0,40]\) using \(200\) observation times, and only the infected prevalence \(I(t)\) is treated as observed data.

For each candidate parameter pair, we solve the nudged SIS system
\begin{equation}
\begin{cases}
\dot{\tilde S}
=
-\beta\tilde S\tilde I+\gamma\tilde I,\\
\dot{\tilde I}
=
\beta\tilde S\tilde I-\gamma\tilde I
+\mu_I\bigl(I_{\mathrm{data}}(t)-\tilde I\bigr),
\end{cases}
\label{eq:sis_nudged}
\end{equation}
with \(\mu_I=0.5\). The cost functional is
\begin{equation}
C(\beta,\gamma)
=
\sum_{i=1}^{N_t}
\left|I_{\mathrm{data}}(t_i)-\tilde I(t_i;\beta,\gamma)\right|^2.
\label{eq:sis_cost}
\end{equation}
The parameter search domain is $\Theta_{\mathrm{s}}
=
[0.55,0.75]\times[0.18,0.32],$
corresponding to the parameters \((\beta,\gamma)\).

Table~\ref{tab:sis_params} reports the SIS parameter estimates obtained from
the IBM simulator, IBM Kingston QPU, and SQA. All three
implementations recover parameter values close to the true generating
parameters. The estimated value of \(\beta\) has relative error approximately
\(1.48\%\), while the estimated value of \(\gamma\) has relative error
approximately \(2.70\%\).

\begin{table}[h]
\centering
\small
\begin{tabularx}{0.95\textwidth}{c c X X X}
\hline
Parameter 
& True value 
& IBM Simulator 
& IBM QPU 
& SQA \\
\hline
\(\beta\)
& \(0.65\)
& \(0.659677\; (1.48\%)\)
& \(0.659677\; (1.48\%)\)
& \(0.653226\; (0.49\%)\) \\
\(\gamma\)
& \(0.25\)
& \(0.256774\; (2.70\%)\)
& \(0.256774\; (2.70\%)\)
& \(0.252258\; (0.90\%)\) \\
\hline
\end{tabularx}
\caption{Estimated SIS parameters obtained from the IBM simulator, IBM Kingston QPU, and SQA. Relative percentage errors are shown in parentheses.}
\label{tab:sis_params}
\end{table}

Figure~\ref{fig:sis_trajectories} compares the susceptible and infected population-fraction trajectories generated using the estimated parameters with the true SIS
trajectories. The estimated trajectories closely overlap with the true
solutions for both compartments over the full time interval. This shows that
the recovered parameters reproduce the SIS dynamics accurately, even though the
final parameter search is performed through the QUBO/Ising optimization
formulation.

\begin{figure}[h]
\centering
\includegraphics[width=0.92\textwidth]{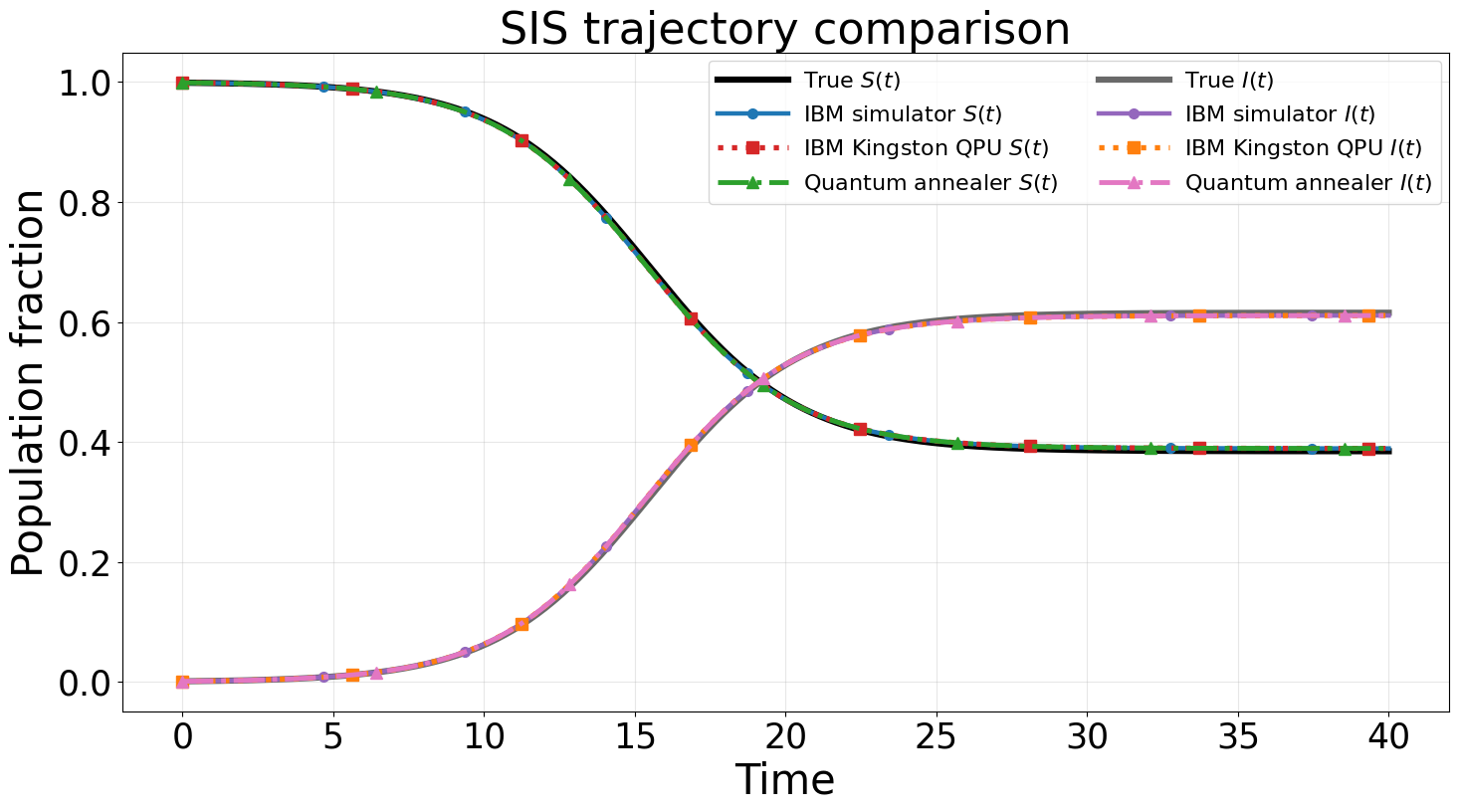}
\caption{SIS trajectory comparison for the susceptible and infected population fractions using the true parameters and the parameters estimated by the IBM simulator, IBM Kingston QPU, and SQA.}
\label{fig:sis_trajectories}
\end{figure}

\subsection{SIR Model}
\label{SIR Model}

We consider the classical susceptible--infected--recovered (SIR) model, which is commonly used to describe the spread of an infectious disease in a closed population \cite{kermack}. The population is divided into three compartments: susceptible \(S(t)\), infected \(I(t)\), and recovered \(R(t)\). The dynamics are governed by
\begin{equation}
\begin{cases}
\dot{S} = -\beta S I, \\
\dot{I} = \beta S I - \gamma I, \\
\dot{R} = \gamma I,
\end{cases}
\label{eq:sir_model}
\end{equation}
where \(\beta>0\) is the transmission rate and \(\gamma>0\) is the recovery rate. Throughout this subsection, the state variables are normalized population fractions, so that \(S(t), I(t), R(t)\in[0,1]\), and the total population satisfies \(S(t)+I(t)+R(t)=1\).

Synthetic infected-prevalence observations are generated by numerically solving Eq.~\eqref{eq:sir_model} with $\beta_{\mathrm{true}} = 0.32,
\gamma_{\mathrm{true}} = 0.12,$
and initial condition $S(0)=0.99,
I(0)=0.01,
R(0)=0.$
The system is solved over the time interval \([0,30]\). The infected component of the trajectory, denoted by \(I_{\mathrm{data}}(t)\), is treated as the observed prevalence data. Thus, the estimation procedure uses only observations of the infected population fraction, while \(\beta\) and \(\gamma\) are treated as unknown.

The parameter search domain is $\Theta_{\mathrm{s}}
=
[0.10,0.50]\times[0.05,0.25],$
corresponding to the parameters \((\beta,\gamma)\). For each coarse-grid parameter pair, the nudged SIR system is solved with nudging gain \(\mu=0.5\). Since the infected compartment is observed, the nudging term is applied to the infected equation:
\begin{equation}
\begin{cases}
\dot{\tilde{S}} = -\beta \tilde{S}\tilde{I}, \\
\dot{\tilde{I}} = \beta \tilde{S}\tilde{I} - \gamma \tilde{I}
+ \mu\big(I_{\mathrm{data}}(t)-\tilde{I}\big), \\
\dot{\tilde{R}} = \gamma \tilde{I}.
\end{cases}
\label{eq:sir_nudged}
\end{equation}
The data-misfit cost function is evaluated using only the observed infected-prevalence component:
\begin{equation}
C(\beta,\gamma)
=
\sum_{i=1}^{N_t}
\left|I_{\mathrm{data}}(t_i)-\tilde{I}(t_i;\beta,\gamma)\right|^2,
\label{eq:sir_cost}
\end{equation}
where \(N_t=200\) is the number of observation times.

Table~\ref{tab:sir_params} reports the SIR parameter estimates obtained from
the IBM simulator, IBM Kingston QPU, and SQA. All three
implementations recover the same parameter values, which are close to the true
generating parameters. The estimated transmission rate \(\beta\) has relative
error approximately \(0.20\%\), while the estimated recovery rate \(\gamma\)
has relative error approximately \(0.81\%\).

\begin{table}[h]
\centering
\small
\begin{tabularx}{0.95\textwidth}{c c X X X}
\hline
Parameter 
& True value 
& IBM Simulator 
& IBM QPU 
& SQA \\
\hline
\(\beta\)
& \(0.32\)
& \(0.319355\; (0.20\%)\)
& \(0.319355\; (0.20\%)\)
& \(0.319355\; (0.20\%)\) \\
\(\gamma\)
& \(0.12\)
& \(0.120968\; (0.81\%)\)
& \(0.120968\; (0.81\%)\)
& \(0.120968\; (0.81\%)\) \\
\hline
\end{tabularx}
\caption{Estimated SIR parameters obtained from the IBM simulator, IBM Kingston QPU, and SQA. Relative percentage errors are shown in parentheses.}
\label{tab:sir_params}
\end{table}
Figure~\ref{fig:sir_trajectories} compares the susceptible, infected, and
recovered population-fraction trajectories generated using the estimated parameters with the true
SIR trajectories. The estimated trajectories closely overlap with the true
solutions for all three compartments over the full time interval. This shows
that the recovered parameters reproduce the SIR dynamics accurately from
infected-prevalence observations, even though the final parameter search is
performed through the QUBO/Ising optimization formulation.

\begin{figure}[h]
\centering
\includegraphics[width=0.95\textwidth]{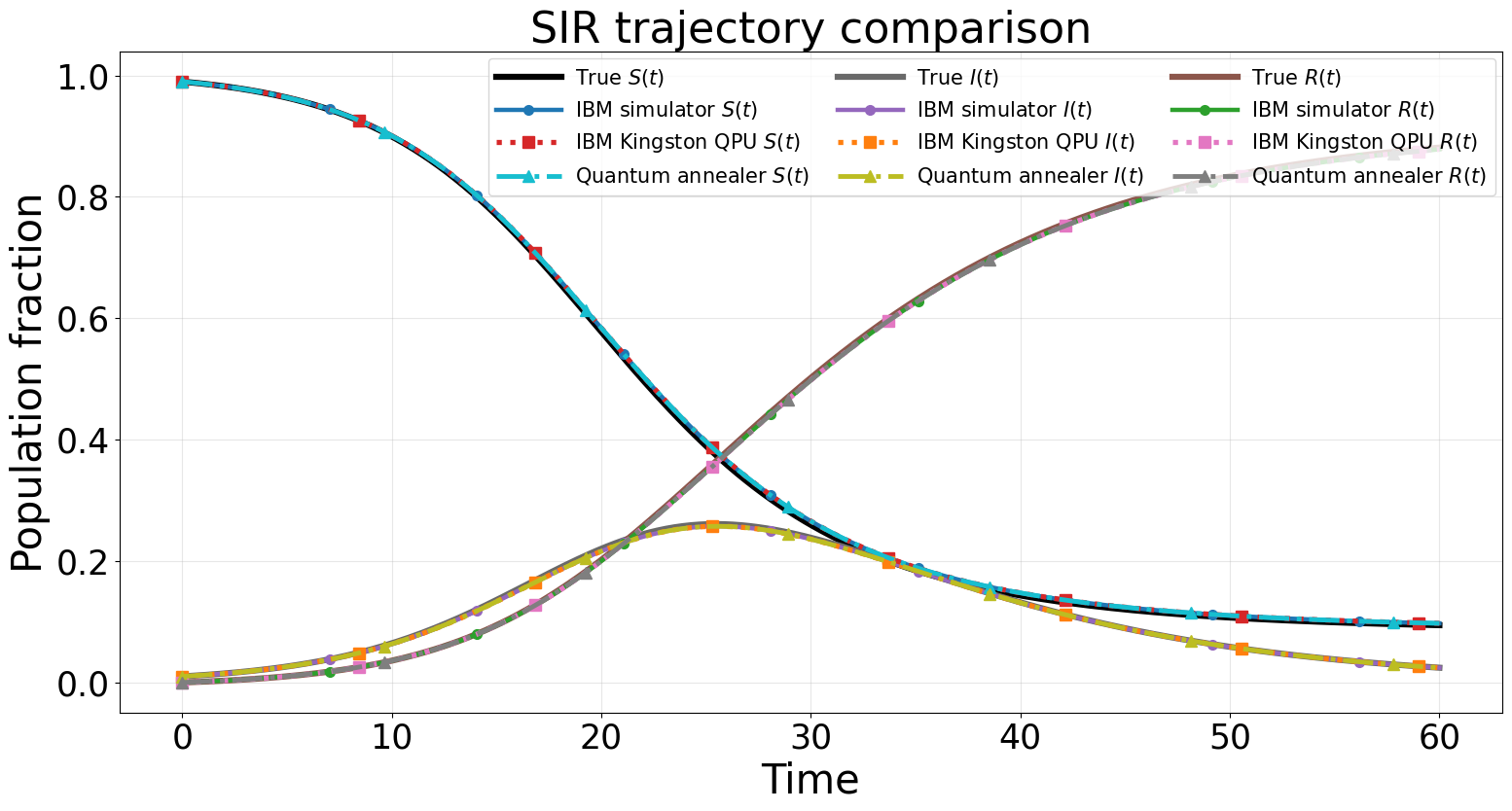}
\caption{SIR trajectory comparison for the susceptible, infected, and recovered population fractions using the true parameters and the parameters estimated by the IBM simulator, IBM Kingston QPU, and SQA.}
\label{fig:sir_trajectories}
\end{figure}

\subsection{Lorenz--63 System}
\label{lorenz63}

The Lorenz--63 system was introduced by Edward Lorenz in 1963 as a simplified mathematical model for atmospheric convection \cite{lorenz1963}. It is a three-dimensional nonlinear system of ordinary differential equations given by
\begin{equation}
\begin{cases}
    \dot{x}= \sigma (y - x), \\
    \dot{y}= x(\rho - z) - y, \\
    \dot{z}= xy - \beta z,
\end{cases}
\label{eq:lorenz_model}
\end{equation}
where \(x(t)\), \(y(t)\), and \(z(t)\) denote the state variables, and \(\sigma\), \(\rho\), and \(\beta\) are model parameters. In the physical interpretation of the model, \(x\) is proportional to the intensity of convective motion, \(y\) represents the horizontal temperature variation, and \(z\) represents the vertical temperature variation \cite{sparrow1982}. The parameter \(\sigma\) is the Prandtl number, \(\rho\) is related to the Rayleigh number, and \(\beta\) depends on the geometry of the convective layer.

For the Lorenz--63 experiment, we choose the true parameter values $\sigma_{\mathrm{true}} = 10,
\rho_{\mathrm{true}} = 30,
\beta_{\mathrm{true}} = \frac{8}{3} \approx 2.666667,$
for which the system exhibits chaotic dynamics. We use the initial condition $x(0)=1,
y(0)=1,
z(0)=1.$
Synthetic observations are generated by numerically solving Eq.~\eqref{eq:lorenz_model} over \([0,20]\) using \(400\) uniformly spaced time points. Only the \(x\)-component of the resulting trajectory, denoted by \(x_{\mathrm{data}}(t)\), is treated as observed data. Therefore, although the full Lorenz--63 system is used to generate the synthetic trajectory, the estimation procedure has access only to partial observations. The parameters \(\sigma\), \(\rho\), and \(\beta\) are all treated as unknown.

The parameter search domain is $\Theta_{\mathrm{s}}
=
[8,12]\times[28,32]\times[2.4,2.9],$
corresponding to the parameters \((\sigma,\rho,\beta)\). For each coarse-grid parameter triple, the nudged Lorenz--63 system is solved with nudging gain \(\mu=30\). Since only the \(x\)-component is observed, the nudging term is applied only to the \(x\)-equation:
\begin{equation}
\begin{cases}
    \dot{\tilde{x}}= \sigma(\tilde{y}-\tilde{x})
    + \mu\big(x_{\mathrm{data}}(t)-\tilde{x}\big), \\
    \dot{\tilde{y}}= \tilde{x}(\rho-\tilde{z})-\tilde{y}, \\
    \dot{\tilde{z}}= \tilde{x}\tilde{y}-\beta\tilde{z}.
\end{cases}
\label{eq:lorenz_nudged}
\end{equation}
The data-misfit cost functional is evaluated using only the observed \(x\)-component:
\begin{equation}
    C(\sigma,\rho,\beta)
    =
    \sum_{i=1}^{N_t}
    \left|x_{\mathrm{data}}(t_i)-\tilde{x}(t_i;\sigma,\rho,\beta)\right|^2,
\label{eq:lorenz_cost}
\end{equation}
where \(N_t=400\). The continuous quadratic surrogate fitted from the \(8^3\) coarse-grid cost functional values is evaluated on a refined \(32^3\) grid and encoded as a QUBO cost functional.

Table~\ref{tab:lorenz_params} reports the Lorenz--63 parameter estimates
obtained from the IBM simulator, IBM Kingston QPU, and SQA. The estimates are close to the true parameter values, with relative
errors ranging from about \(2.80\%\) to \(4.56\%\). These discrepancies are
expected in this chaotic and partially observed setting, where only the
\(x\)-component is used in the cost functional.

\begin{table}[H]
\centering
\small
\begin{tabularx}{0.95\textwidth}{c c X X X}
\hline
Parameter 
& True value 
& IBM simulator 
& IBM QPU 
& SQA \\
\hline
\(\sigma\)
& \(10.0\)
& \(9.67741\; (3.23\%)\)
& \(9.67741\; (3.23\%)\)
& \(9.67741\; (3.23\%)\) \\
\(\rho\)
& \(30.0\)
& \(31.22580\; (4.09\%)\)
& \(30.70967\; (2.80\%)\)
& \(31.09677\; (2.36\%)\) \\
\(\beta\)
& \(2.67\)
& \(2.54516\; (4.56\%)\)
& \(2.57741\; (3.35\%)\)
& \(2.593548\; (2.74\%)\) \\
\hline
\end{tabularx}
\caption{Estimated Lorenz--63 parameters obtained from the IBM simulator, IBM Kingston QPU, and SQA using only \(x\)-component observations. Relative percentage errors are shown in parentheses.}
\label{tab:lorenz_params}
\end{table}

Figure~\ref{fig:lorenz63_all} compares the phase-space trajectories generated using the estimated parameters with the true Lorenz--63 trajectory. The estimated trajectories reproduce the overall structure of the Lorenz attractor and remain close to the true attractor despite small differences in the recovered parameter values. Because the Lorenz--63 system is chaotic, pointwise agreement over long time intervals is not expected; instead, the phase-space comparison demonstrates that the estimated parameters preserve the characteristic geometry of the underlying chaotic dynamics.

\begin{figure}[h]
\centering
\includegraphics[width=0.5\textwidth]{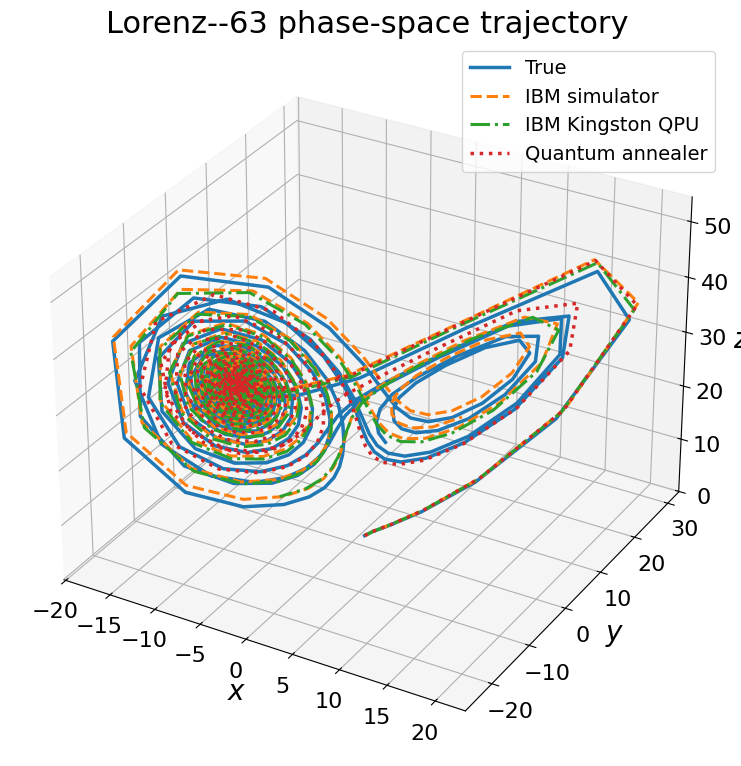}
\caption{Lorenz--63 phase-space trajectory comparison using the true parameters and the parameters estimated by the IBM simulator, IBM Kingston QPU, and SQA.}
\label{fig:lorenz63_all}
\end{figure}

\subsection{Lorenz--96 System}
\label{lorenz96}

We finally consider a high-dimensional, multiscale two-layer Lorenz--96 system. The original Lorenz--96
model is a standard simplified model for atmospheric dynamics and predictability
\cite{lorenz1996}. In the one-layer setting, the dynamics are
\begin{equation}
\frac{d u_k}{dt}
=
u_{k-1}(u_{k+1}-u_{k-2}) - d_k u_k + F,
\label{eq:l96_one_layer}
\end{equation}
where the indices are interpreted periodically, \(u_k(t)\) denotes the
large-scale state variable, \(d_k\) is a damping coefficient, and \(F\) is a
constant forcing term.

Following the two-layer Lorenz--96 formulation in \cite{Martinez2024}, we use
a model with \(K\) slow variables \(u_k(t)\), \(k=0,\ldots,K-1\), and \(J\)
fast variables \(v_{k,j}(t)\), \(j=0,\ldots,J-1\), associated with each slow
variable. The two-layer system is
\begin{equation}
\begin{cases}
\displaystyle
\dot{u}_k
=
u_{k-1}\big(u_{k+1}-u_{k-2}\big)
+
\sum_{j=0}^{J-1}\gamma_j v_{k,j}u_k
-
d_{u,k}u_k
+
F, \\[2mm]
\displaystyle
\dot{v}_{k,j}
=
-d_{v,k,j}v_{k,j}
-
\gamma_j u_k^2,
\end{cases}
\label{eq:l96_model}
\end{equation}
where the indices in the slow variables are interpreted periodically. Here
\(\gamma_j\) controls the coupling between the slow and fast variables, while
\(d_{u,k}\) and \(d_{v,k,j}\) are damping parameters for the slow and fast
variables, respectively.

In this experiment, we use $K=40, J=5, F=5,$
and the coupling vector $\gamma=(0.10,0.15,0.20,0.25,0.35).$
The true slow damping parameters are generated by
\begin{equation}
d_{u,k}^{\mathrm{true}}
=
1+0.7\cos\left(\frac{2\pi(k+1)}{5}\right),
\qquad k=0,\ldots,K-1,
\label{eq:l96_du_true}
\end{equation}
and the true fast damping parameters are
\[
d_{v,k,:}^{\mathrm{true}}=(0.2,0.5,1.0,2.0,5.0),
\qquad k=0,\ldots,K-1.
\]
The goal is to estimate the two slow damping parameters $d_{u,5} \text{ and }
d_{u,15}.$
The corresponding true values are
\[
d_{u,5}^{\mathrm{true}}=1.216312,
\qquad
d_{u,15}^{\mathrm{true}}=1.216312.
\]
All other parameters are fixed at their true values. The corresponding slow components \(u_5(t)\) and
\(u_{15}(t)\) are observed.

Synthetic observations are generated by solving Eq.~\eqref{eq:l96_model} on
\([0,40]\) using \(600\) uniformly spaced time points. The deterministic initial
condition is
\[
u_k(0)
=
F+0.2\sin\left(\frac{2\pi k}{K}\right),
\qquad
v_{k,j}(0)=0.1 .
\]
For each candidate parameter pair $\theta=(d_{u,5},d_{u,15}),$
we solve a nudged two-layer Lorenz--96 system. The nudging terms are applied
only to the observed slow equations \(u_5\) and \(u_{15}\). Thus,
\begin{equation}
\begin{cases}
\displaystyle
\dot{\tilde{u}}_k
=
\tilde{u}_{k-1}\big(\tilde{u}_{k+1}-\tilde{u}_{k-2}\big)
+
\sum_{j=0}^{J-1}\gamma_j \tilde{v}_{k,j}\tilde{u}_k
-
d_{u,k}\tilde{u}_k
+
F,
\qquad k\notin\{5,15\}, \\[2mm]
\displaystyle
\dot{\tilde{u}}_5
=
\tilde{u}_{4}\big(\tilde{u}_{6}-\tilde{u}_{3}\big)
+
\sum_{j=0}^{J-1}\gamma_j \tilde{v}_{5,j}\tilde{u}_5
-
d_{u,5}\tilde{u}_5
+
F
+
\mu\big(u_{5,\mathrm{data}}(t)-\tilde{u}_5\big), \\[2mm]
\displaystyle
\dot{\tilde{u}}_{15}
=
\tilde{u}_{14}\big(\tilde{u}_{16}-\tilde{u}_{13}\big)
+
\sum_{j=0}^{J-1}\gamma_j \tilde{v}_{15,j}\tilde{u}_{15}
-
d_{u,15}\tilde{u}_{15}
+
F
+
\mu\big(u_{15,\mathrm{data}}(t)-\tilde{u}_{15}\big), \\[2mm]
\displaystyle
\dot{\tilde{v}}_{k,j}
=
-d_{v,k,j}\tilde{v}_{k,j}
-
\gamma_j \tilde{u}_k^2 .
\end{cases}
\label{eq:l96_nudged}
\end{equation}
In the numerical experiment, the nudging gain is \(\mu=15\).

The cost functional is computed using the two observed slow components:
\begin{equation}
C(d_{u,5},d_{u,15})
=
\sum_{i=1}^{N_t}
\left|u_{5,\mathrm{data}}(t_i)-\tilde{u}_5(t_i)\right|^2
+
\sum_{i=1}^{N_t}
\left|u_{15,\mathrm{data}}(t_i)-\tilde{u}_{15}(t_i)\right|^2 .
\label{eq:l96_cost}
\end{equation}
Here \(N_t=600\). The parameter search interval for each unknown parameter is chosen as $\theta_j\in
\left[
\theta_j^{\mathrm{true}}-0.5,\,
\theta_j^{\mathrm{true}}+0.5
\right],$
with a positive lower bound imposed when necessary. The coarse
data-assimilation grid contains \(8^2=64\) expensive model evaluations, and the
refined QUBO search uses a \(32\times32\) grid. Since each parameter is encoded
using five bits, the final binary optimization problem uses \(10\) binary
variables.

Table~\ref{tab:l96_params} reports the recovered parameter values obtained from
the IBM simulator, IBM Kingston QPU, and SQA. All three
implementations recover the same estimates for \(d_{u,5}\) and \(d_{u,15}\),
with relative errors approximately \(1.33\%\).

\begin{table}[H]
\centering
\small
\begin{tabularx}{0.95\textwidth}{c c X X X}
\hline
Parameter 
& True value 
& IBM simulator 
& IBM QPU 
& SQA \\
\hline
\(d_{u,5}\)
& \(1.216312\)
& \(1.200183\; (1.33\%)\)
& \(1.200183\; (1.33\%)\)
& \(1.200183\; (1.33\%)\) \\
\(d_{u,15}\)
& \(1.216312\)
& \(1.200183\; (1.33\%)\)
& \(1.200183\; (1.33\%)\)
& \(1.200183\; (1.33\%)\) \\
\hline
\end{tabularx}
\caption{Estimated two-layer Lorenz--96 parameters obtained from the IBM simulator, IBM Kingston QPU, and SQA. Relative percentage errors are shown in parentheses.}
\label{tab:l96_params}
\end{table}


\subsection{Surrogate and QUBO Representation Errors}
\label{subsec:fitting_errors}

Table~\ref{tab:fitting_errors} summarizes the continuous-surrogate fitting
errors and numerical QUBO representation errors for the four model problems.
The quantities \(\mathrm{RMSE}_{\mathrm{LS}}\) and
\(e_{\infty,\mathrm{LS}}\) measure the accuracy of the continuous quadratic
surrogate in approximating the normalized coarse-grid cost functional values,
while \(\mathrm{MSE}_{\mathrm{QUBO}}\) measures the numerical accuracy of the
QUBO representation on the refined grid.

\begin{table}[h]
\centering
\begin{tabular}{l c c c}
\hline
Model
& \(\mathrm{RMSE}_{\mathrm{LS}}\)
& \(e_{\infty,\mathrm{LS}}\)
& \(\mathrm{MSE}_{\mathrm{QUBO}}\) \\
\hline
SIS
& \(4.258770\times10^{-3}\)
& \(9.251285\times10^{-2}\)
& \(1.711676\times10^{-30}\) \\
SIR
& \(2.034177\times10^{-2}\)
& \(1.216511\times10^{-1}\)
& \(1.379741\times10^{-31}\) \\
Lorenz--63
& \(9.213976\times10^{-2}\)
& \(8.081225\times10^{-1}\)
& \(1.367646\times10^{-28}\) \\
Lorenz--96 & \(2.371029\times10^{-3}\) & \(2.344146\times10^{-2}\) & \(7.624507\times10^{-31}\) \\
\hline
\end{tabular}
\caption{Continuous-surrogate fitting errors and numerical QUBO
representation errors for the four numerical experiments.}
\label{tab:fitting_errors}
\end{table}

The QUBO representation errors are near numerical precision for the reported
models, indicating that the QUBO coefficient-recovery step introduces no
meaningful additional approximation. 
 

\section{Conclusions}
\label{D}

This work developed a hybrid classical--quantum framework for parameter estimation in nonlinear dynamical systems. The proposed approach combines data-assimilation-augmented cost functional construction with a QUBO optimization stage. All dynamical simulations are performed classically: the nudged ODE system is solved only on a prescribed coarse parameter grid, and the resulting data-misfit cost functional values are used to construct a quadratic surrogate. This surrogate is then evaluated on a refined grid, encoded as a QUBO, mapped to an Ising Hamiltonian, and approximately minimized using quantum optimizers. In this way, the quantum component is used only for the discrete optimization stage, while the potentially difficult task of simulating nonlinear ODEs remains entirely classical. This separation makes the framework compatible with quantum optimization methods and avoids the need for quantum state tomography or direct quantum simulation of nonlinear dynamics.

The main contribution of the paper is a coarse-to-refined quantum-assisted parameter estimation strategy that reduces the number of expensive data-assimilation solves while still allowing a refined binary search over the parameter domain. The method provides a systematic connection between nudging-based inverse problems, quadratic surrogate modeling, QUBO formulations, Ising Hamiltonians, and quantum optimizers. Theoretical results were also presented to clarify the effect of surrogate approximation error and refined-grid discretization on the recovered parameter estimate. These results show that, when the surrogate uniformly approximates the refined-grid target cost functional and the grid resolution is sufficiently fine, the minimizer of the QUBO surrogate provides a near-optimal estimate for the underlying discrete parameter estimation problem.

Numerical experiments demonstrated the performance of the proposed framework on four representative systems: the SIS and SIR epidemic models, the Lorenz--63 system, and the two-layer Lorenz--96 system. For the SIS model, the recovered parameters had relative errors of about
\(1.48\%\) for \(\beta\) and \(2.70\%\) for \(\gamma\). For the SIR model, the relative errors were below \(1\%\) for both parameters. In the Lorenz--63 example, where only the \(x(t)\) component was observed, the parameter errors were larger, as expected for a chaotic and partially observed system, but the recovered trajectories still reproduced the overall Lorenz attractor structure. In the two-layer Lorenz--96 example, the two damping parameters \(d_{u,5}\) and \(d_{u,15}\) were recovered with relative errors of about \(1.33\%\), using observations only from \(u_5(t)\) and \(u_{15}(t)\).

An important feature of these experiments is that the estimation procedure uses only partial and time-discrete observations. The SIS and SIR examples use infected-prevalence observations rather than full-state observations. The Lorenz--63 experiment uses only the \(x(t)\) component to estimate all three model parameters, and the Lorenz--96 experiment uses only two observed slow variables, \(u_5(t)\) and \(u_{15}(t)\), to estimate the corresponding damping parameters. Moreover, the observations are available only at finitely many time points, not continuously in time. Thus, the results demonstrate that the proposed framework can be used in realistic observation settings where only selected components of the state are measured at discrete sampling times. Additional numerical tests, not reported here, indicated that increasing the
number of observation time points generally improved the accuracy of the recovered
parameter values. This suggests that the amount and temporal resolution of the
available data can influence the accuracy of the proposed parameter estimation
framework.

The IBM Kingston experiments show that the approach can be executed on real
quantum hardware, although the recovered estimates may be affected by
sampling noise and hardware noise. The SQA results provide an additional
comparison for the same QUBO energy landscape. The computational time of the final optimization stage depends on the selected solver and its execution settings. The classical setup stage, including data-assimilation solves, surrogate construction, QUBO fitting, circuit
construction, and transpilation, remains part of the total computational cost.
Therefore, runtime comparisons should be interpreted in the context of the
full hybrid workflow.

Another important point is that the present proof-of-concept experiments use a relatively small number of qubits. The two-parameter examples (SIS, SIR, Lorenz--96) use \(10\) binary variables, corresponding to five bits per parameter, while the Lorenz--63 example uses \(15\) binary variables for three unknown parameters. Even with this modest number of qubits, the refined search corresponds to \(32^2=1024\) candidate parameter values for two-parameter problems and \(32^3=32768\) candidate parameter values for the three-parameter Lorenz--63 problem. This illustrates the potential benefit of the binary encoding: increasing the number of available qubits would allow finer parameter grids, larger parameter spaces, or both. Thus, as quantum hardware improves, the same framework could support higher-resolution searches and more complex inverse problems.

Future work will focus on improving the scalability, robustness, and practical applicability of the framework. In particular, we plan to develop adaptive coarse-grid sampling and local refinement strategies so that data-assimilation solves are concentrated in promising regions with small cost functional values, especially when the cost functional landscape is nonquadratic, multimodal, or difficult to resolve on a uniform grid. We also plan to explore more expressive surrogate models, such as higher-order, piecewise, sparse, or physics-informed surrogates, while preserving a QUBO-compatible formulation. Another important direction is to study scalability with respect to the number
of unknown parameters and the number of bits used per parameter, since larger quantum devices would allow finer parameter grids and higher-dimensional inverse problems.  It will also be important to investigate identifiability more systematically for partially observed dynamical systems, including the chaotic Lorenz--63 system and the high-dimensional two-layer Lorenz--96 system, and to assess the effect of noisy observations, model error, correlated parameters, hardware noise, queue time, circuit depth, sampling variability, and shot count. More broadly, applying the framework to noisy real-world data, larger-scale dynamical systems, and PDE-constrained inverse problems would further clarify the practical role of quantum optimization in parameter estimation.
Additionally, while we employ QAOA and quantum annealing in this work because of their natural compatibility with QUBO formulations and their applicability to NISQ devices, future studies should investigate alternative quantum optimization approaches, including QHD, DQI, and other emerging methods, on near-term and fault-tolerant quantum platforms.

\section*{Acknowledgments}
The authors gratefully acknowledge support from the University of Maryland, Baltimore County (UMBC) Strategic Awards for Research Transitions (START) grant (PI: Animikh Biswas). Mohammadhossein Mohammadisiahroudi was also supported by UMBC Summer Research Faculty Fellowship (SURFF) grant for this project. The quantum computing experiments reported in this work used IBM Quantum services, including simulator and quantum hardware access. The simulated quantum annealing (SQA) experiments were performed using the D-Wave Ocean software and its \texttt{PathIntegralAnnealingSampler}.

\raggedbottom
\bibliographystyle{unsrt}
\bibliography{references}
\end{document}

%% file: ex_shared.tex
\usepackage{lipsum}
\usepackage{amsfonts}
\usepackage{graphicx}
\usepackage{epstopdf}
\usepackage{algorithm}
\usepackage{algpseudocode}
\usepackage{float}
\usepackage{supertabular}
\usepackage{mathtools}
\usepackage{amsmath}
\usepackage{supertabular}
\usepackage{multicol}
\usepackage{subcaption}
\usepackage{mathtools}
\ifpdf
  \DeclareGraphicsExtensions{.eps,.pdf,.png,.jpg}
\else
  \DeclareGraphicsExtensions{.eps}
\fi

\newsiamremark{remark}{Remark}
\newsiamremark{hypothesis}{Hypothesis}
\crefname{hypothesis}{Hypothesis}{Hypotheses}
\newsiamthm{claim}{Claim}
\newsiamremark{fact}{Fact}
\crefname{fact}{Fact}{Facts}

\headers{Quantum Optimization Framework for Parameter Estimation}{M. J. Ahmad, M. Mohammadisiahroudi, A. Biswas and K. Hoffman}

\title{A quantum optimization framework for data--assimilation--augmented parameter estimation\thanks{Submitted to the editors 08.12.2026.
}}

\author{
Muhammad Jalil Ahmad\thanks{
Department of Mathematics and Statistics, University of Maryland Baltimore County, USA
(\email{LS47576@umbc.edu}).
}
\and
Mohammadhossein Mohammadisiahroudi\footnotemark[2] \thanks{
Quantum Science Institute, University of Maryland Baltimore County, USA.
} 
\and Animikh Biswas\footnotemark[2]
\and Kathleen Hoffman\footnotemark[2]
}

\usepackage{amsopn}
